\documentclass[10pt]{amsart}
\usepackage{amsmath}
\usepackage{amsfonts}
\usepackage{amssymb}
\usepackage[dvips]{graphicx}
\usepackage{subfigure}









\newcommand{\bC}{\field{C}}        





















\def\Cessum{{\lower 1.5pt \hbox{\large C}}{-}\!\!\!\sum}
\def\CHI{\hbox{\raise .5ex \hbox{$\chi$}}}
























\newcommand{\field}[1]{\mathbb{#1}}   
















\newcommand{\GLsp}{{\boldsymbol G\kern-0.1em\boldsymbol L}}











\newcommand{\inner}[2]{\langle #1,#2\rangle}   

\def\Msp{{\boldsymbol M}}


\def\mv1{\Msp_v^1}   




\newcommand{\norm}[1]{\lVert#1\rVert}   














\def\Rdst{{\Rst^d}}   





\def\Rst{{\mathbb R}}


\def\sabs{{{\raise 0.5pt \hbox{\footnotesize $|$}}}}








\newcommand{\set}[2]{\big\{ \, #1 \, \big| \, #2 \, \big\}}   



\def\shah{{\makebox[2.7ex][s]{$\sqcup$\hspace{-0.5em}\hfill$\sqcup$}}}   
\def\shah1{{\makebox[2.3ex][s]{$\sqcup$\hspace{-0.15em}\hfill $\sqcup$}}}   

\def\sinc{\operatorname{sinc}}   


\def\slb{{{\raise 0.5pt \hbox{\footnotesize $[$}}}}
\def\slcb{{{\raise 0.5pt \hbox{\footnotesize $\{$}}}}
\def\slp{{{\raise 0.5pt \hbox{\footnotesize $($}}}}

\newcommand{\snorm}{{{\raise 0.5pt \hbox{\footnotesize $\|$}}}}












\def\srb{{{\raise 0.5pt \hbox{\footnotesize $]$}}}}
\def\srcb{{{\raise 0.5pt \hbox{\footnotesize $\}$}}}}
\def\srp{{{\raise 0.5pt \hbox{\footnotesize $)$}}}}













\newcommand{\supess}{\mathop{\operatorname{supess}}}   
\def\supp{\operatorname{supp}}   

\def\Zst{{\mathbb Z}}

\newcommand{\sett}[1]{\ensuremath{\left \{ #1 \right \}}}
\newcommand{\abs}[1]{\ensuremath{\left| #1 \right| }}

\newcommand{\pwo}{\ensuremath{PW_\Omega}}

\newtheorem{theo}{Theorem}
\newtheorem{prop}{Proposition}
\newtheorem{obs}{Observation}
\newtheorem{assum}{Assumption}
\newtheorem{claim}{Claim}
\newtheorem{rem}{Remark}
\newtheorem{example}{Example}

\title[The integrate and fire sampler]{Approximate reconstruction of
bandlimited functions for the integrate and fire sampler}

\author[H.G.~Feichtinger]{Hans G. Feichtinger}
\address{Faculty of Mathematics, University of Vienna\\ Nordbergstrasse 15\\
Vienna, Austria}
\email[Hans G. Feichtinger]{hans.feichtinger@univie.ac.at}

\author[J.~C.~Principe]{Jos\'e C. Pr\'incipe}
\address{Department of Electrical and Computer Engineering \\ University of
Florida
\\ Gainesville, FL 32611, USA}
\email[J.~C.~Pr\'incipe]{principe@cnel.ufl.edu}

\author[J.L.~Romero]{Jos\'e Luis Romero}
\address{Departamento de
Matem\'atica \\ Facultad de Ciencias Exactas y Naturales\\ Universidad
de Buenos Aires\\ Ciudad Universitaria, Pabell\'on I\\ 1428 Capital
Federal\\ Argentina\\ and CONICET, Argentina}
\email[Jos\'e Luis Romero]{jlromero@dm.uba.ar}

\author[A.~S.~Alvarado]{Alexander Singh Alvarado}
\address{Department of Electrical and Computer Engineering \\ University of
Florida
\\ Gainesville, FL 32611, USA}
\email[Alexander~Singh~Alvarado]{asingh@cnel.ufl.edu}

\author[G.~Velasco]{Gino Angelo Velasco}
\address{Faculty of Mathematics, University of Vienna\\ Nordbergstrasse 15\\
Vienna, Austria\\ and Institute of Mathematics, University of the Philippines Diliman\\ Quezon City, Philippines}
\email[Gino Angelo Velasco]{gino.velasco@univie.ac.at}

\begin{document}
\maketitle
\begin{abstract}
In this paper we study the reconstruction of a bandlimited signal from samples generated by the integrate and fire model. This sampler allows us to
trade complexity in the reconstruction algorithms for simple hardware
implementations, and is specially convenient in situations where the
sampling device is limited in terms of power, area and bandwidth.

Although perfect reconstruction for this sampler is impossible, we give
a general approximate reconstruction procedure and bound the corresponding
error. We also show the performance of the proposed algorithm through
numerical simulations.

\keywords{{\bf Keywords:} integrate and fire, non-uniform sampling,
bandlimited function.}
\end{abstract}

\section{Introduction}
The integrate and fire (IF) model is well known in computational
neuroscience as a simplified model of a neuron \cite{Gerstner:2002p4013, Rieke:1997} and is typically used to study the dynamics
of large populations. The model consists of a leaky integrator followed by a
comparator. The leak corresponds to a gradual loss of the value of the
integral.

More recently, the IF model has also been considered as a sampler
\cite{Chen:0p2586,Wei:2005p2180,Lazar:2005p3764,Lazar:2003p1620}, where the
sampler output is tuned to the variation of the integral of the signal. This
feature can be exploited when sampling neural recordings, for which relevant
information is localized in small intervals where the signal has a high
amplitude \cite{singh09}.

The block diagram of the sampler is presented in Figure \ref{fig:IF_diagram}.
At every instant $s$, the continuous input $x(t)$ is integrated against an
averaging function $u_{k,s}(t)$ and the result is compared to a positive
and negative threshold. When either of these is reached, a pulse is created
at time $t_{k}=s$ representing the threshold value (positive or negative),
the value of the integrator is then reset and the process repeats. 
\begin{figure}[ht]
	\begin{center}
\includegraphics[angle = -90, width=0.4\textwidth]{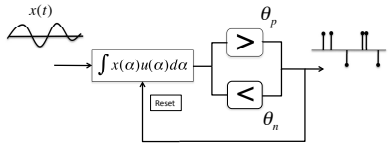}
\caption{Block diagram for the BIF model.}
\label{fig:IF_diagram}
	\end{center}
\end{figure}
The output is a nonuniformly spaced pulse train, where each of the pulses is
either 1 or -1. The averaging function $u_{k,s}(t)$ is defined by
$e^{\alpha(t-s)}\mathcal{X}_{[t_{k}, s]}$, where $\mathcal{X}_{I}$ is the
characteristic function of $I$ and $\alpha > 0$ is a constant that models the
leakage of the integrator due to practical implementations. The precise
firing condition determining the pulses is:
\begin{equation}
\label{uk}
\pm \theta = \int_{t_{k}}^{t_{k+1}} f(t)
e^{\frac{-(t_{k+1}-t)}{\alpha}} dt =: \inner{f}{u_k}.
\end{equation}
The simplicity of the sampler translates into an efficient hardware
implementation which saves both power and area when compared to conventional
analog-to-digital converters (ADC) \cite{Chen:0p2586}. These constraints are
severe, in the case of wireless brain machine interfaces
\cite{Sanchez:2008}, for which the entire system has to be embedded inside
the subject. Hence, the IF sampler allows us to move the complexity of the
design into the reconstruction algorithm while providing a simple front end
at the sampling stage.

The problem of reconstructing a signal from the IF output should be
distinguished from the study of the dynamics of a population of neurons,
when some stochastic assumption is made on the firing parameters
\cite{lazar09}. In this article we study the deterministic reconstruction of
a bandlimited signal from the integrate and fire output. Part of the
challenge of this stems from the fact that the sampling map that
associates a function to its samples is non-linear. Indeed,
we see from Equation \eqref{uk} that the magnitude of the samples is
always $\theta$. Moreover, exact reconstruction for the IF sampler is
impossible since the output of the sampler does not completely determine the
signal (see Example \ref{no_fire} below.) 

In this article, we will show that it is however possible to approximately
reconstruct a bandlimited signal in $L^\infty$ norm with an error comparable
to the threshold $\theta$. Moreover, we give a concrete reconstruction
procedure which is of course non-linear but, nevertheless, easy to implement.
Since in many situations the IF sampler is so much more convenient
to implement than conventional analog-to-digital converters, the loss of
accuracy in the reconstruction is a very reasonable trade-off
\cite{Chen:0p2586}, specially if the final analysis of the reconstructed
data tolerates some small error \cite{singh09}.

The methods considered so far \cite{Lazar:2003p1620} reconstruct the signal
$f$ from the system of equations $\inner{f}{u_k} = \pm \theta$ (cf. Equation
\eqref{uk}), thus treating the reconstruction as a (linear) average sampling
problem (see \cite{fegr94, al02, suzh02-4, sc03}). These approaches impose
density restrictions on the set of sampling functions $\sett{u_k}_k$
(cf. Equation \eqref{uk}.)
Since these sampling functions depend on the signal, the density constraints
on them are somehow unnatural.

The key for the reconstruction method that we develop lies in the
observation that the information derived from the IF output is much richer
than the mere system of equations $\inner{f}{u_k} = \pm \theta$. It also
contains the information
that no proper subinterval $[t_k, t']$ of $[t_k,t_{k+1}]$ satisfies Equation
\eqref{uk}. We will exploit this extra information to give an approximate
reconstruction procedure for a general bandlimited function. Since the
sampling process starts at a certain instant $t_0$, an
additional assumption on the size of $f$ before $t_0$ is required in order to
fully reconstruct $f$. Roughly speaking, the assumption means that the
sampling scheme would not have produced any pulse before $t_0$. 

In Section \ref{sec_problem} we formally describe the output of the IF
sampling scheme. This output depends on an initial time $t_0$ when the
process is started and two parameters: the threshold $\theta$ and the
constant $\alpha>0$ modeling the leakage on the sampler. In Section
\ref{sec_output} we show that the IF output is always a finite sequence.
Section \ref{sec_rec} gives the approximate reconstruction procedure and 
Section \ref{sec_num} presents some numerical experiments.

\section{The integrate and fire sampling problem}
\label{sec_problem}
We now define precisely the integrate and fire sampling scheme. Throughout
the article we will assume the following.
\begin{assum}
\label{assum_1}
A bandlimited function $f \in \pwo$ and numbers $t_0 \in \Rst$, $\alpha,
\theta>0$ are given.
\end{assum}
Here, $\pwo$ is the Paley-Wiener space
\[
\pwo := \set{f \in
L^2(\Rst)}{\supp(\hat{f}) \subseteq [-\Omega,\Omega]},
\]
of (complex-valued) bandlimited functions
and $\hat{f}(w) := \int_\Rst f(x) e^{-2\pi i w x} dx$ is the Fourier
transform of $f$. We call $t_0$ the \emph{initial time}, $\alpha$ the
\emph{firing
parameter}
and $\theta$ the \emph{threshold}. Using these parameters we formally define
the output of the sampler. We first define recursively a finite or countable
sequence $t_0 < \ldots < t_j \ldots$ called \emph{the time instants}. Suppose
that the instants $t_0 < \ldots < t_j$ have already been defined and consider
the function $F_j:[t_j,+\infty) \to \bC$ given by
\[
F_j(t) := \int_{t_j}^t f(x) e^{\alpha(x-t)} dx.                       
\]
Observe that $F_j$ is continuous and $F_j(t_j)=0$. If
$\abs{F_j(t)} < \theta$, for all $t \geq t_j$, then the process stops.
If $\abs{F_j(t)} \geq \theta$, for some $t \geq t_j$, by the continuity of
$F_j$, we can define $t_{j+1}$
as the minimum number satisfying the equation
\begin{equation}
\label{fire_tj}
\abs{\int_{t_j}^{t_{j+1}} f(x) e^{\alpha(x-{t_{j+1}})} dx} = \theta.
\end{equation}
Clearly, in this case $t_{j+1} > t_j$.

We have defined a finite or countable sequence of points $t_0 <
\ldots < t_j \ldots$. We will prove in Proposition \ref{samples_interval}
that this sequence is in fact finite. Let us assume the time instants $\sett{t_0, \ldots, t_n}$ and define the \emph{samples} $\sett{q_1, \ldots, q_n}$ by,
\begin{equation}
\label{qj}
q_j := \int_{t_{j-1}}^{t_j} f(x) e^{\alpha(x-t_j)} dx,
\qquad (1 \leq j \leq n).
\end{equation}
Observe that, by the definition of the time intervals, $\abs{q_j} = \theta$.

The output of the sampler is formally given by the time instants $\sett{t_0,
\ldots, t_n}$ and the numbers
$\sett{q_1, \ldots, q_n}$. We say that this output has been produced by the
\emph{integrate and fire}.
The succeeding results apply generally to complex-valued functions, but in the case of the application that motivated this sampling scheme, the
signal is taken to be real-valued, and
the output of the sampler is encoded as a train of
impulses, where only the sign of the samples $q_j$ is stored.

\section{Some remarks on the IF output}
First we note that bandlimited functions are not completely determined by
the output of the IF sampler.
\begin{example}
\label{no_fire}
There are non-zero bandlimited signals that will never
produce an output from the sampler. Take for
instance $f_{\theta}(x)=\frac{\theta\,\sin^2(\pi\,x)}{2\pi^2\,x^2}$.
Since
$\hat{f_{\theta}}(\omega)=\frac{\theta}{2}\max\{1-|\omega|,0\}$, $f_{\theta}$
is bandlimited. We have for any $t_0\in\mathbb{R}$,
$\left|\int_{t_0}^{t}f_{\theta}(x)\,e^{\alpha(x-t)}dx\right|\leq\int_{t_0}^{
t}\left|\frac{\theta\,\sin^2(\pi\,x)}{2\pi^2\,x^2}\right|\,dx\leq\frac{\theta
}{2}\int_{\mathbb{R}}\frac{\sin^2(\pi\,x)}{\pi^2\,x^2}\,dx=\frac{\theta}{2}
<\theta,$
for all $t\geq t_0.$
\end{example}

\label{sec_output}
We now prove that the set of time instants produced by the IF sampler is
indeed finite and give some
bounds on its distribution. To this end we introduce some auxiliary functions
that will be used throughout the remainder of the article.

Consider the function $g: \Rst \to \Rst$ given by $g(x) = e^{-\alpha
x}\chi_{[0, \infty]}$ and define,
\begin{align}
\label{deff_v}
v(t) := (f*g)(t) := \int_{-\infty}^t f(x) e^{\alpha(x-t)} dx.
\end{align}
Since $g \in L^1(\Rst)$, $v \in \pwo$. In the Fourier domain, $v$ and $f$ are
related by
\begin{align}
\label{fv_fourier}
\hat{f}(w) = \left(2 \pi i w + \alpha\right) \hat{v}(w).
\end{align}
In the time domain, this can be expressed as
\begin{align}
\label{fv_time}
f(t) = \frac{\partial v(t)}{\partial t} + \alpha v(t).
\end{align}

\begin{obs}
\label{v_is_c0}
The function $v$ is continuous and $v(t) \longrightarrow 0$, when $t
\longrightarrow \pm \infty$.
\end{obs}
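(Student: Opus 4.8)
The plan is to deduce both assertions from the Fourier-side description already set up above. Since $g \in L^1(\Rst)$ we have recorded that $v = f*g \in \pwo$; thus $\hat v \in L^2(\Rst)$ and $\supp \hat v \subseteq [-\Omega,\Omega]$. A square-integrable function supported in $[-\Omega,\Omega]$ is automatically integrable, since by Cauchy--Schwarz $\int_{-\Omega}^{\Omega} |\hat v(w)|\,dw \le \sqrt{2\Omega}\,\|\hat v\|_{L^2} < \infty$. Hence $v = \mathcal{F}^{-1}\hat v$ with $\hat v \in L^1(\Rst)$, so $v$ is continuous and, by the Riemann--Lebesgue lemma applied to the inverse transform, $v(t) \to 0$ as $t \to \pm\infty$. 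This already proves the observation.

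If one prefers to argue directly in the time domain (and thereby avoid invoking that every Paley--Wiener function lies in $C_0$), write $v(t) = e^{-\alpha t} h(t)$ with $h(t) := \int_{-\infty}^{t} f(x) e^{\alpha x}\,dx$. The integral is absolutely convergent for every $t$: by Cauchy--Schwarz, $\int_{-\infty}^{t} |f(x)| e^{\alpha x}\,dx \le \|f\|_{L^2}\,\big(\int_{-\infty}^{t} e^{2\alpha x}\,dx\big)^{1/2} = \|f\|_{L^2}\, e^{\alpha t}/\sqrt{2\alpha}$. Continuity of $v$ then follows from continuity of $t \mapsto e^{-\alpha t}$ together with the absolute continuity of the integral, since $|h(t') - h(t)| \le \int_{\min(t,t')}^{\max(t,t')} |f(x)| e^{\alpha x}\,dx \to 0$ as $t' \to t$. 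For the decay at $-\infty$ the same estimate gives $|v(t)| = e^{-\alpha t}|h(t)| \le \tfrac{1}{\sqrt{2\alpha}}\,\|f\|_{L^2((-\infty,t])} \to 0$ as $t \to -\infty$, because $f \in L^2(\Rst)$.

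The only delicate point is the limit as $t \to +\infty$, where the crude bound $e^{-\alpha t}\int_{-\infty}^{t}|f|e^{\alpha x}$ no longer decays. Here I would split, for an auxiliary $M$, $v(t) = \int_{-\infty}^{M} f(x) e^{\alpha(x-t)}\,dx + \int_{M}^{t} f(x) e^{\alpha(x-t)}\,dx$. The first term equals $e^{-\alpha t}\int_{-\infty}^{M} f(x) e^{\alpha x}\,dx$, which for fixed $M$ tends to $0$ as $t \to \infty$; the second, using $e^{\alpha(x-t)} \le 1$ on $[M,t]$ and Cauchy--Schwarz, is at most $\tfrac{1}{\sqrt{2\alpha}}\,\|f\|_{L^2([M,\infty))}$, which is small once $M$ is large. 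Fixing $M$ first and then letting $t \to \infty$ gives $v(t) \to 0$. This ``choose $M$, then $t$'' splitting (equivalently, the Riemann--Lebesgue argument of the first paragraph) is the heart of the matter; everything else is routine. I would in fact present the Fourier argument as the proof, since it is the shortest and fits the setup already in place, and mention the time-domain computation only as a remark.
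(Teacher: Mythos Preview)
Your primary argument is correct and is essentially the paper's own proof: the paper also notes $v\in\pwo$, deduces $\hat v\in L^1$ from $\hat v\in L^2$ with compact support, and invokes the Riemann--Lebesgue lemma. Your supplementary time-domain argument is also correct (the phrase ``using $e^{\alpha(x-t)}\le 1$'' before the Cauchy--Schwarz bound $\tfrac{1}{\sqrt{2\alpha}}\|f\|_{L^2([M,\infty))}$ is superfluous, since that bound comes directly from Cauchy--Schwarz with the $L^2$ norm of the exponential), but the paper does not include it.
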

\begin{proof}
We have already observed that $v \in \pwo$. Since $\hat{v} \in L^2$ and
$\supp(\hat{v}) \subseteq [-\Omega, \Omega]$, we have that $\hat{v} \in
L^1$ and the conclusion follows from the Riemann-Lebesgue Lemma.
\end{proof}
The following straightforward equation relates $v$ to the integrate and fire
process.
\begin{equation}
\label{v_if}
\int_s^t f(x) e^{\alpha(x-t)} dx = v(t) - e^{\alpha(s-t)} v(s),
\quad s \leq t.
\end{equation}

We can now prove that the output of the IF process is finite.
\begin{prop}
\label{samples_interval}
Under Assumption 1, the following holds.
\begin{itemize}
 \item[(a)] The set of time instants produced by the integrate and fire
scheme is a finite set
$\sett{t_0, \ldots, t_n}$.
 \item[(b)] The numbers of time instants $t_j$ in a given finite interval
$[a,b]$ is bounded by
\[
\frac{\norm{f}_2}{\theta} (b-a)^{1/2} +1.
\]
 \item[(c)] If $f$ is integrable, the total number of time instants is
bounded by
\[
\frac{\norm{f}_1}{\theta}+1.
\]
\end{itemize}
\end{prop}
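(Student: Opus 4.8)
The plan is to base everything on two elementary estimates for the samples $q_j$ of \eqref{qj} together with the decay of the auxiliary function $v$ supplied by Observation \ref{v_is_c0}. On the interval of integration in \eqref{qj} one has $x \le t_j$, hence $e^{\alpha(x-t_j)} \le 1$, which gives
\[
\theta = \abs{q_j} \le \int_{t_{j-1}}^{t_j} \abs{f(x)}\, dx
\le (t_j - t_{j-1})^{1/2} \Bigl( \int_{t_{j-1}}^{t_j} \abs{f(x)}^2\, dx \Bigr)^{1/2},
\]
the last step by Cauchy--Schwarz; note the intervals $[t_{j-1},t_j]$ are pairwise disjoint. For part (c) I would sum the first inequality over $j = 1,\dots,n$ and telescope, obtaining $n\theta \le \int_{t_0}^{t_n}\abs{f} \le \norm{f}_1$. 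For part (b), letting $[a,b]$ contain $m$ time instants (automatically a consecutive run), I would sum the second inequality over the $m-1$ intervals between them and apply Cauchy--Schwarz once more in the summation index; then $\sum (t_j - t_{j-1}) \le b-a$ and $\sum \int_{t_{j-1}}^{t_j}\abs{f}^2 \le \norm{f}_2^2$ give $(m-1)\theta \le (b-a)^{1/2}\norm{f}_2$, which is the claimed bound (the ``$+1$'' absorbing the passage from $m-1$ to $m$, and the cases $m\le 1$ being trivial). This argument does not presuppose finiteness of the instant set, so it already shows that only finitely many $t_j$ can lie in any bounded interval.

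For part (a) it then remains to exclude an infinite sequence of time instants escaping to $+\infty$. Here I would use \eqref{v_if} to rewrite the firing condition \eqref{fire_tj} at $t_{j+1}$ as
\[
\theta = \abs{v(t_{j+1}) - e^{\alpha(t_j - t_{j+1})} v(t_j)} \le \abs{v(t_{j+1})} + \abs{v(t_j)},
\]
using $t_j < t_{j+1}$ in the last step. By Observation \ref{v_is_c0} choose $T$ with $\abs{v(t)} < \theta/2$ for all $t \ge T$. If some $t_j \ge T$, then any further instant $t_{j+1}$ would also satisfy $t_{j+1} \ge T$, and the display would force $\theta < \theta$; so no such $t_{j+1}$ exists and the process stops at $t_j$. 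Hence at most one time instant can exceed $T$, while $[t_0,T]$ contains only finitely many by the bound just proved, so the whole set is finite, say $\sett{t_0,\dots,t_n}$.

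I expect the only genuinely delicate point to be getting the order of part (a) right: the interval-counting estimate of (b) has to be established \emph{before} finiteness — which is why I would phrase it for an arbitrary collection of instants in $[a,b]$ rather than for the full set — and then paired with the decay of $v$ to kill the tail $[T,\infty)$. Everything else (the two applications of Cauchy--Schwarz and the telescoping of disjoint intervals) is routine. Once finiteness is established for $\sett{t_0,\dots,t_n}$, the statements of (b) and (c) are exactly the computations above, so nothing further is needed.
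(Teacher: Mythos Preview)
Your proposal is correct and follows essentially the same route as the paper: bound each firing by the $L^1$ mass of $f$ on the corresponding interval, sum over the $m-1$ intervals between consecutive instants in $[a,b]$, and pass to the $L^2$ bound; then for (a) use the interval bound to force $t_j\to\infty$ and derive a contradiction from the decay of $v$ via \eqref{v_if}. The only cosmetic difference is that the paper applies H\"older once to $\int_a^b|f|$ after summing, whereas you apply Cauchy--Schwarz per interval and then again in the summation index; both give $(m-1)\theta\le(b-a)^{1/2}\|f\|_2$.
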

\begin{proof}
We first prove (b) and (c). Let $[a,b]$ be an interval and let $\sett{t_j,
\ldots, t_{j+m-1}}$ be $m$ consecutive time instants contained in $[a,b]$. If
$m \leq 1$ the bound is trivial, so assume that $m \geq 2$. For each $0 \leq
k \leq m-2$, using Equation \eqref{fire_tj}
we have,
\begin{align*}
\theta &= \abs{\int_{t_{j+k}}^{t_{j+k+1}} f(x) e^{\alpha(x-t_{j+k+1})} dx}
\\
&\leq \int_{t_{j+k}}^{t_{j+k+1}} \abs{f(x)} dx.
\end{align*}
Summing over the $m-1$ intervals determined by the points $\sett{t_j, \ldots,
t_{j+m-1}}$ we have,
\begin{align}
\label{estimated_number_of_points}
(m-1) \theta &\leq \int_a^b \abs{f(x)} dx.
\end{align}
Letting $a=-\infty$ and $b=+\infty$ yields (b). For (a), H\"older's
inequality gives,
\[
(m-1) \theta \leq \norm{f}_2 (b-a)^{1/2}, 
\]
and the conclusion follows.

Now we prove (a). Assume on the contrary that the IF process goes on forever
producing an infinite set of instants $\sett{t_j: j \geq 0}$. Given $s>t_0$,
by part (b), only a finite number of instants $t_j$ belong to $[t_0, s]$.
Therefore $t_n \rightarrow +\infty$, as $n \rightarrow +\infty$. Using
Equations \eqref{v_if} and \eqref{fire_tj} it follows that,
\begin{align*}
\theta &= \abs{\int_{t_j}^{t_{j+1}} f(x) e^{\alpha(x-t_{j+1})} dx}
\\
&= \abs{v(t_{j+1}) - e^{\alpha(t_j-t_{j+1})} v(t_j)}
\leq \abs{v(t_{j+1})} + \abs{v(t_j)}.
\end{align*}
This contradicts Observation \ref{v_is_c0}.
\end{proof}

\section{The reconstruction}
\label{sec_rec}
We now address the problem of approximately reconstructing a bandlimited
function from the integrate and fire output. Since the samples are taken in
the half-line $[t_0,+\infty)$ we will make some assumption about the size of
$f$ before the initial instant. Roughly speaking, the integrate and fire process would not have produced any sample in the
interval $(-\infty,t_0]$.
\begin{assum}
\label{assum_2}
The function defined in Equation \eqref{deff_v} satisfies,
\[
\abs{v(t)} \leq \theta,
\mbox{ for all $t \leq t_0$.} 
\]
\end{assum}
Note that by Observation \ref{v_is_c0}, any $t_0 \ll 0$ satisfies this
assumption. To approximately reconstruct $f$ we will first approximately
reconstruct $v$ from the integrate and fire output and then derive
information about $f$ by means of Equation \eqref{fv_fourier}. We will use
the structure of the IF process to produce a number of approximate samples
for $v$.

First we argue that, from the output of the IF process, we have enough
information to approximate $v$ on the time instants $\sett{t_0, \ldots,
t_n}$.
Rewriting Equation \eqref{qj} in terms of $v$ 
(cf. Equation \eqref{v_if}) we have,
\begin{equation}
\label{rec_vtj}
v(t_{j+1}) =  e^{\alpha(t_j-t_{j+1})}v(t_j) + q_{j+1},
\quad (0 \leq j \leq n-1).
\end{equation}
Since the value $v(t_0)$ may not be exactly known we cannot determine from
this recurrence relation all the values $v(t_j)$. However, we can construct
an approximation to these values. Let $w_0 := 0$ and define recursively,
\begin{equation}
\label{rec_wj}
w_{j+1} =  e^{\alpha(t_j-t_{j+1})}w_j + q_{j+1},
\quad (0 \leq j \leq n-1).
\end{equation}
Observe that Assumption \ref{assum_2} implies that $\abs{w_0-v(t_0)} \leq
\theta$. Using this estimate as a starting point we can iterate on Equation
\eqref{rec_vtj} and \eqref{rec_wj} to get,
\begin{equation}
\label{estimate_w}
\abs{w_j - v(t_j)} \leq \theta,
\quad (0 \leq j \leq n).
\end{equation}
Consequently, using only the output of the IF sampling scheme, we have
constructed a set of values $\sett{w_0, \ldots, w_n}$ that approximates $v$
on
the instants $\sett{t_0, \ldots, t_n}$. The second step is to approximate $v$
on
an arbitrary point of $\Rst$.

To this end observe that, according to the definition of $t_j$
as the minimum number satisfying Equation \eqref{fire_tj}, we have that,
\begin{equation*}
\abs{\int_{t_j}^t f(x) e^{\alpha(x-t)} dx} \leq \theta,
\quad
\mbox{for all $t \in [t_j,t_{j+1}]$.}
\end{equation*}
Rewriting this inequality in terms of $v$ (cf. Equation \eqref{v_if}) gives,
\begin{equation}
\label{band_almost}
\biggl | v(t) - e^{\alpha(t_j-t)}v(t_j) \biggr | \leq \theta,
\quad
\mbox{for all $t \in [t_j,t_{j+1}]$.}
\end{equation}
Combining this last inequality with \eqref{estimate_w} yields,
\begin{equation}
\label{band}
\biggl | v(t) - e^{\alpha(t_j-t)} w_j \biggr | \leq 2 \theta,
\quad
\mbox{for all $t \in [t_j, t_{j+1}]$.}
\end{equation}
We now show that this inequality allows us to approximate $v$ anywhere on the
line.
\begin{claim}
\label{claim_approx_v}
Given an arbitrary time instant $t \in \Rst$, choose $x \in \Rdst$ in the
following way:
\begin{itemize}
 \item[(a)] if $t < t_0$, let $x:=0$,
 \item[(b)] if $t$ belongs to some (unique) interval $[t_j,t_{j+1})$, let
$x := e^{\alpha(t_j-t)} w_j$,
 \item[(c)] if $t \geq t_n$, let $x:= e^{\alpha(t_n-t)} w_n$.
\end{itemize}
Then, $\abs{v(t)-x} \leq 2 \theta$.
\end{claim}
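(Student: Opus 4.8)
The plan is a straightforward case analysis following exactly the trichotomy in the statement; the three regimes $(-\infty,t_0)$, $[t_0,t_n)=\bigcup_{j=0}^{n-1}[t_j,t_{j+1})$, and $[t_n,+\infty)$ partition $\Rst$, so it suffices to treat each case in turn.

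For case (b) there is essentially nothing to do: with $x=e^{\alpha(t_j-t)}w_j$ and $t\in[t_j,t_{j+1})$, the bound $\abs{v(t)-x}\le 2\theta$ is literally inequality \eqref{band}, which was obtained by combining the localization estimate \eqref{band_almost} (a consequence of $t_{j+1}$ being the \emph{minimal} firing time after $t_j$) with the error bound \eqref{estimate_w} for $w_j$. For case (a), with $x=0$ and $t<t_0$, Assumption \ref{assum_2} gives directly $\abs{v(t)-x}=\abs{v(t)}\le\theta\le 2\theta$.

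The only case that requires an argument is (c). Here $t\ge t_n$ and $x=e^{\alpha(t_n-t)}w_n$. Since $t_n$ is the last time instant produced by the scheme, the integrate and fire process stopped at $t_n$, which by the very definition of the stopping rule means $\abs{\int_{t_n}^{s}f(x)e^{\alpha(x-s)}dx}<\theta$ for every $s\ge t_n$. Rewriting this with \eqref{v_if} yields $\abs{v(t)-e^{\alpha(t_n-t)}v(t_n)}<\theta$. I would then apply the triangle inequality together with $\abs{w_n-v(t_n)}\le\theta$ from \eqref{estimate_w} and the elementary bound $0<e^{\alpha(t_n-t)}\le 1$ (valid since $t\ge t_n$ and $\alpha>0$):
\[
\abs{v(t)-e^{\alpha(t_n-t)}w_n}\le\abs{v(t)-e^{\alpha(t_n-t)}v(t_n)}+e^{\alpha(t_n-t)}\,\abs{v(t_n)-w_n}<\theta+\theta=2\theta.
\]

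I do not expect any genuine obstacle here: all the substantive estimates — the localization inequality \eqref{band_almost} and the propagation-of-error bound \eqref{estimate_w} — are already established, so the proof reduces to bookkeeping over the three regimes. The one point worth keeping in mind is that ``no further instants are produced after $t_n$'' is precisely the stopping condition $\abs{F_n}<\theta$ on $[t_n,+\infty)$, and this is what supplies the missing estimate in case (c).
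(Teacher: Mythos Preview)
Your proposal is correct and follows essentially the same approach as the paper: case (a) via Assumption~\ref{assum_2}, case (b) directly from \eqref{band}, and case (c) by reading the stopping condition as $\bigl|v(t)-e^{\alpha(t_n-t)}v(t_n)\bigr|\le\theta$ and combining with \eqref{estimate_w} through the triangle inequality. The only difference is that you spell out the bound $e^{\alpha(t_n-t)}\le 1$ explicitly, which the paper leaves tacit.
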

\begin{rem}
Observe that the procedure to obtain $x$ from $t$ depends only on the output
of the IF process. 
\end{rem}

\begin{proof}
For case (a), the conclusion follows from Assumption \ref{assum_2}. For case
(b), the conclusion follows from Inequality \eqref{band}. For case (c), the
fact that the fire condition is never satisfied after $t_n$ gives,
\begin{equation}
\biggl | v(t) - e^{\alpha(t_n-t)}v(t_n) \biggr | \leq \theta.
\end{equation}
Combining this estimate with Inequality \eqref{estimate_w}, the conclusion
follows.
\end{proof}

We will now choose a window function.
\begin{assum}
\label{assum_3}
A Schwartz class function $\psi$ such that
\begin{itemize}
 \item $\hat{\psi} \equiv 1$ on $[-\Omega,\Omega]$, and,
 \item $\hat{\psi}$ is compactly supported,
\end{itemize}
has been chosen.
\end{assum}
Since $v \in \pwo$, the classic oversampling trick for bandlimited functions
(see for example \cite{fe92-3} or \cite{fegr94}) implies that there exists a
number $0<\beta<(2 \Omega)^{-1}$ such that
\begin{equation}
\label{expansion_v}
v = \sum_{k\in\Zst} v(\beta k) \psi(\cdot-\beta k).
\end{equation}

Using the procedure described in Claim \ref{claim_approx_v}, we produce a set
$\sett{s_k}_{k \in \Zst}$
such that
\begin{equation}
\label{v_approx}
\abs{v(\beta k) - s_k} \leq \ 2 \theta,
\mbox{ for all $k\in\Zst$}.
\end{equation}

Let $\varphi$ be the function defined by,
\begin{equation}
\label{deff_phi}
\hat{\varphi}(w) = \left(2 \pi i w + \alpha\right) \hat{\psi}(w).
\end{equation}
It follows that $\varphi$ is also a Schwartz function. Moreover, using
Equation
\eqref{fv_fourier} we have that,
\begin{equation}
\label{expansion_f}
f = \sum_{k\in\Zst} v(\beta k) \varphi(\cdot-\beta k).
\end{equation}
Observe that, since $v \in \pwo$, the sequence $\sett{v(\beta k)}_k \in
\ell^2$ and the series in Equation \eqref{expansion_f} converges in $L^2$ and
uniformly - in fact, it converges in the Wiener amalgam norm $W(C_0, L^2)$, 
see for example \cite{fe92-3}, \cite{fegr94} and \cite{algr01}.)

Now we can define the approximation of $f$ constructed from the IF samples.
Let,
\begin{equation}
\label{approx_f}
\tilde{f} := \sum_{k\in\Zst} s_k \varphi(\cdot-\beta k).
\end{equation}
Since, by Inequality \eqref{v_approx}, the sequence $\sett{s_k}_k$ is bounded
and $\varphi$ is a Schwartz function, it follows that Equation
\eqref{approx_f} defines a bounded function and that the convergence is
uniform (see \cite{fe92-3} or \cite{algr01}.) 

The reconstruction algorithm consists then of calculating the approximated
samples $\sett{s_k}_k$ following Claim 1 and then convolving them with the
kernel $\varphi$, that can be pre-calculated.

We now give a precise error bound for the reconstruction.
\begin{theo}
\label{main_th}
Under Assumptions 1, 2 and 3, the function defined by Equation
\eqref{approx_f}
satisfies,
\[
\norm{f-\tilde{f}}_\infty \leq C \theta,
\]
for some constant C that only depends on $\Omega$ and the window function
chosen in Assumption \ref{assum_3}.
\end{theo}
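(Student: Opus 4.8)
The plan is to compare the two series \eqref{expansion_f} and \eqref{approx_f} term by term. Subtracting them, for every $t \in \Rst$ one has
\[
f(t) - \tilde{f}(t) \;=\; \sum_{k\in\Zst} \bigl(v(\beta k) - s_k\bigr)\,\varphi(t - \beta k),
\]
and this series converges absolutely and uniformly in $t$: $\varphi$ is a Schwartz function and, by Inequality \eqref{v_approx}, the coefficient sequence $\bigl(v(\beta k)-s_k\bigr)_k$ is bounded by $2\theta$. First I would record this identity and the pointwise estimate it immediately yields,
\[
\abs{f(t) - \tilde{f}(t)} \;\leq\; \sum_{k\in\Zst} \abs{v(\beta k) - s_k}\,\abs{\varphi(t-\beta k)} \;\leq\; 2\theta \sum_{k\in\Zst} \abs{\varphi(t-\beta k)} .
\]

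The only remaining task is to bound $M_\varphi := \sup_{t\in\Rst} \sum_{k\in\Zst} \abs{\varphi(t-\beta k)}$ by a finite constant; this is the one substantive point in the argument, and it amounts to the statement that $\varphi$ lies in the Wiener amalgam space $W(C_0,\ell^1)$, a property shared by every Schwartz function. Concretely, since $\varphi$ is Schwartz there is $C_\varphi$ with $\abs{\varphi(x)} \leq C_\varphi (1+\abs{x})^{-2}$ for all $x$, and for each fixed $t$ the set $\{\,t-\beta k : k\in\Zst\,\}$ is $\beta$-separated, so $\sum_{k} (1+\abs{t-\beta k})^{-2}$ is bounded above, uniformly in $t$, by a constant depending only on $\beta$. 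Hence $M_\varphi < \infty$ and $\norm{f-\tilde{f}}_\infty \leq 2 M_\varphi\,\theta$, which is the asserted inequality with $C := 2M_\varphi$.

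Finally, I would keep track of the dependence of the constant: the step $\beta$ is the oversampling parameter supplied by the oversampling trick used in \eqref{expansion_v}, which can be taken to depend only on $\Omega$ and on the support of $\widehat\psi$, while $\varphi$ is determined by $\psi$ (and the fixed firing parameter $\alpha$) through \eqref{deff_phi}; so $C$ depends only on the data fixed in Assumptions \ref{samples_interval}'s setting and Assumption \ref{assum_3}, in particular on $\Omega$ and the chosen window. I do not expect any serious obstacle here: all the real work has already been carried out in Proposition \ref{samples_interval} and Claim \ref{claim_approx_v}, which guarantee that the approximate samples $s_k$ are well defined and satisfy \eqref{v_approx}; the theorem is then essentially the boundedness of the quasi-interpolation map $(c_k)_k \mapsto \sum_k c_k\,\varphi(\cdot-\beta k)$ from $\ell^\infty$ to $L^\infty$, applied to $c_k = v(\beta k)-s_k$.
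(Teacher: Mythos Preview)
Your proposal is correct and follows exactly the paper's approach: subtract the two expansions \eqref{expansion_f} and \eqref{approx_f}, bound the coefficients by $2\theta$ via \eqref{v_approx}, and use that $\sup_t \sum_k \abs{\varphi(t-\beta k)} < \infty$ because $\varphi$ is Schwartz. Your only additions are an explicit decay argument for the last step (the paper just cites a reference) and the observation that the constant also depends on $\alpha$ through \eqref{deff_phi}, which is a fair remark.
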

\begin{proof}
According to Equations \eqref{expansion_f}, \eqref{approx_f} and Inequality \eqref{v_approx},
\begin{align*}
\norm{f-\tilde{f}}_\infty &\leq
\supess \sum_{k\in\Zst} \abs{v(\beta k)-s_k} \abs{\varphi(\cdot-\beta k)}
\\
&\leq 2\theta \supess \sum_{k\in\Zst} \abs{\varphi(\cdot-\beta k)}.
\end{align*}
It suffices to define $C:= 2 \sup \sum_{k\in\Zst} \abs{\varphi(\cdot-\beta
k)}$.
Since $\varphi$ is a Schwartz function, $C<+\infty$ (see for example \cite{fe83}).
\end{proof}
\begin{rem}
We currently do not know what choice of the window function $\psi$ minimizes
the constant in the theorem. A more detailed study of the choice of the
window function should not only consider the size of that constant but also
the rate of convergence of the series in Equation \eqref{approx_f}.
\end{rem}
\section{Numerical experiments}
\label{sec_num}
We study the behavior of the reconstruction algorithm under variations in the threshold and the oversampling period for a specific choice of reconstruction kernel $\psi$.
The test signal $f$ is of finite length and real valued, produced as a
linear combination of five `sinc' kernels ($\sin(\pi x)/(\pi x)$)
at a 1Hz frequency, with random locations and weights. The amplitude of
the input has been normalized to 1. Although the theory covers infinite dimensional spaces, our simulations are limited by the practical implementations of the sampler and algorithms. The effects of truncation and quantization are not considered here. 

This signal is encoded by the IF sampler with $\alpha =1$ and recovered using the procedure described in Section \ref{sec_rec}.  
The reconstruction kernel $\psi$ is a raised cosine, defined by,
\begin{equation}
\psi(t) =
	\sinc(t/T_s) \cos(\pi\gamma
t/T_s)\biggl ( 1-\frac{4\gamma^2t^2}{{T_s}^2}\biggr )^{-1},
\end{equation}
where $\gamma = 0.5$ and $T_s=0.25$ are determined by the maximum input frequency $\Omega$ and the desired oversampling period $\beta$ (cf. Equation
\eqref{expansion_v}.) Figure \ref{fig:raisedCos} shows the raised cosine
$\psi$ in the time and frequency domain. 
Observe that the spectrum of $\psi$ is constant for frequencies less than the
input bandwidth and then decays smoothly towards zero.
The corresponding kernel $\varphi$ (cf. Equation \eqref{deff_phi}) is shown in Figure~\ref{fig:raisedCosfilt}.
\begin{figure}[ht]
\centering
\subfigure[Kernel $\psi$.]{
\includegraphics[width=0.45\textwidth]{./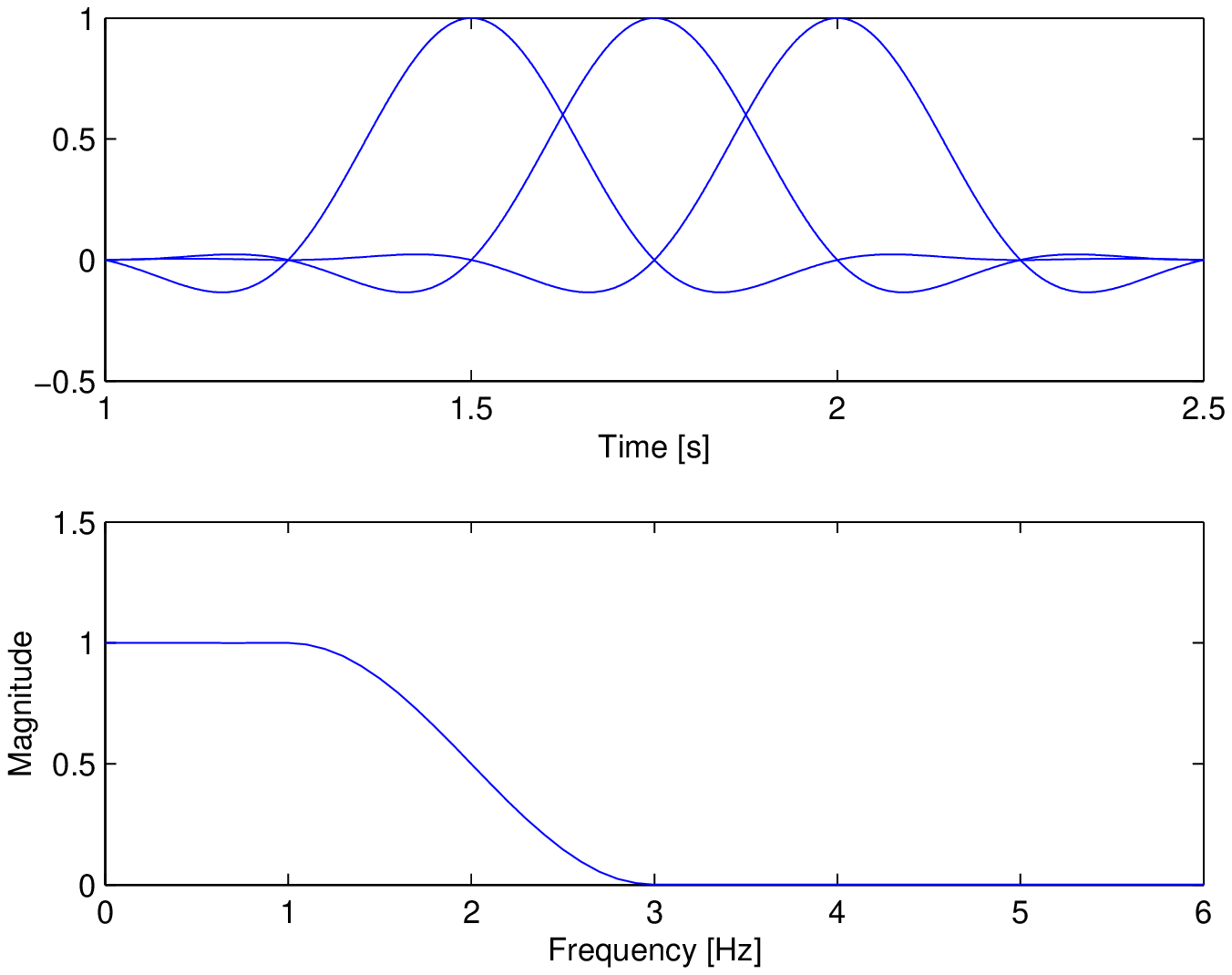}
\label{fig:raisedCos}
}
\subfigure[Kernel $\varphi$.]{
\includegraphics[width=0.45 \textwidth]{./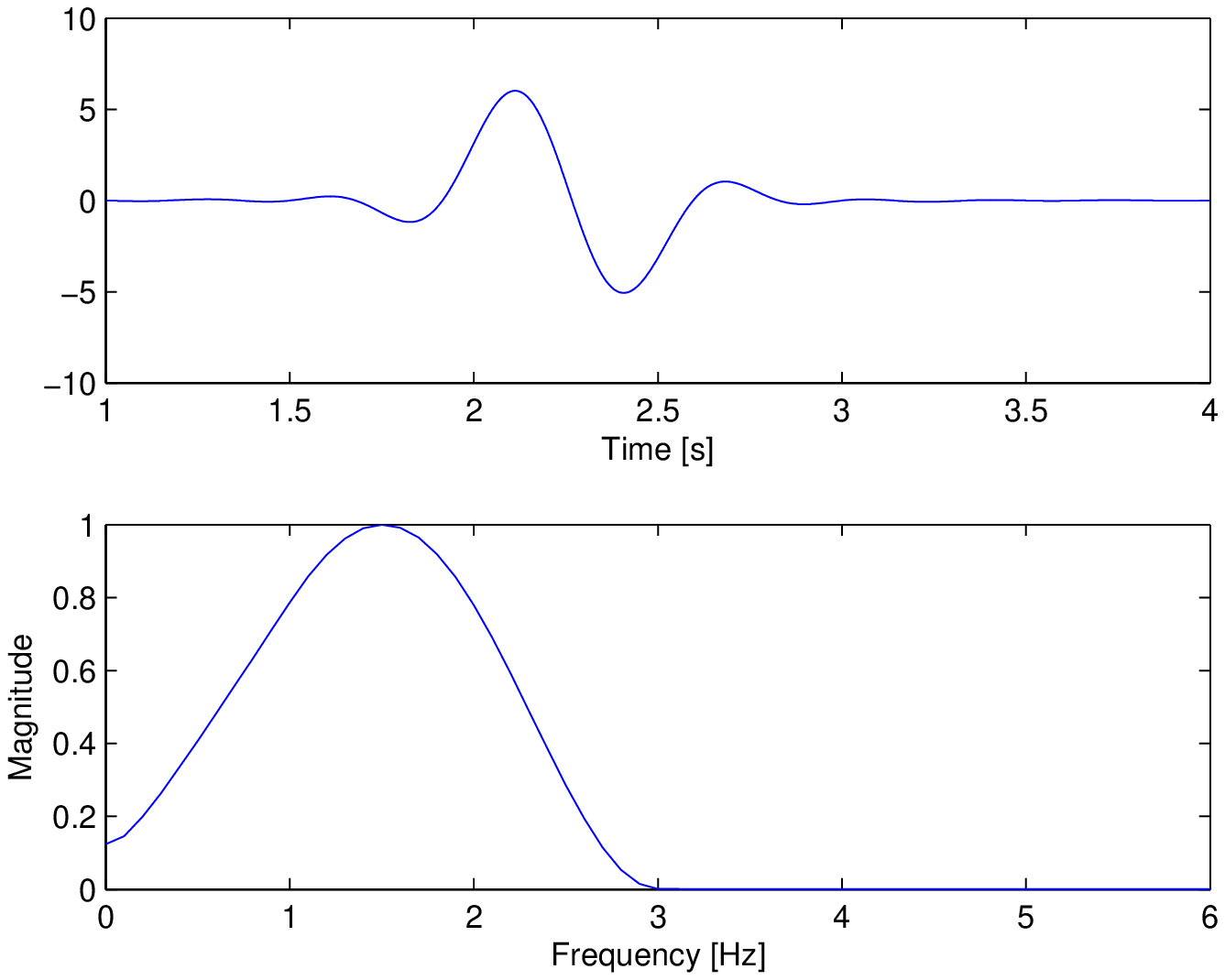} \label{fig:raisedCosfilt}
}
\caption{Reconstruction kernels.}
\end{figure}
Using $\varphi$ we recover $\tilde{f}$ (cf. Equation \eqref{approx_f}.), an approximation of $f$ as shown in Figure \ref{fig:timeRec_decLat}.
As expected the error decreases in regions with high density of samples. This behavior is evident from Figure \ref{fig:approxV}, the dense regions imply that the uniform samples will most likely coincide with the estimated values of $v(t)$ at the sample locations. On the other hand, for samples that are far apart the approximation follows a exponential decay from its original value which is not the natural trend in the signal.
Figure \ref{fig:approxV} shows $v(t)$ (solid line), and the approximated samples of $v$ on the lattice $\beta \Zst$, constructed using the procedure described in Claim \ref{claim_approx_v} called $\sett{s_k}_k$ and the envelope $v(t) \pm \theta$ where these samples are known to lie (dashed line.) 

Currently the reconstruction algorithm uses the approximated samples of $v(t)$ at the pulse locations to define the piecewise exponential bound and estimate the reconstruction coefficients on the uniform lattice. Based on the numerical experiments the algorithm can be improved by including the estimated value of $v(t)$ at the pulse locations although it implies reconstruction on a nonuniform grid.
\begin{figure}[hb]
	\begin{center}
\includegraphics[width=0.7\textwidth]{./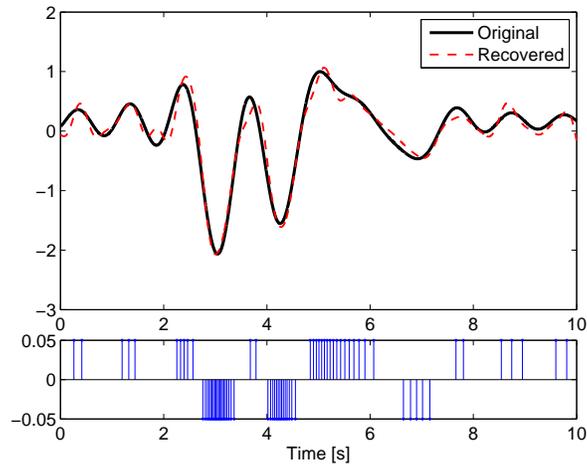}
\caption{Reconstruction of $f(t)$ from the impulse train.}
\label{fig:timeRec_decLat}
	\end{center}
\end{figure}
\begin{figure}[hb]
	\begin{center}
\includegraphics[width=0.7\textwidth]{./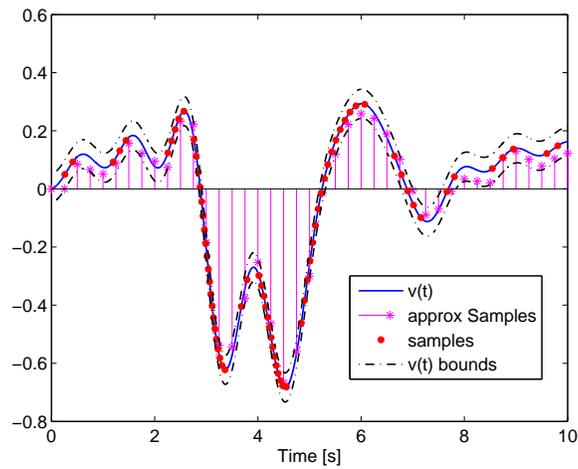}
\caption{Reconstruction of $v(t)$ with $\beta = 1/4$, $\theta = 0.05$}
\label{fig:approxV}
	\end{center}
\end{figure}
For both cases similar error bounds can be defined as in Theorem \ref{main_th}. The variation of the error in relation to the threshold (pulse rate) is shown in Figure \ref{fig:error}. 
\begin{figure}[tb]
	\begin{center}
\includegraphics[width=0.7\textwidth]{./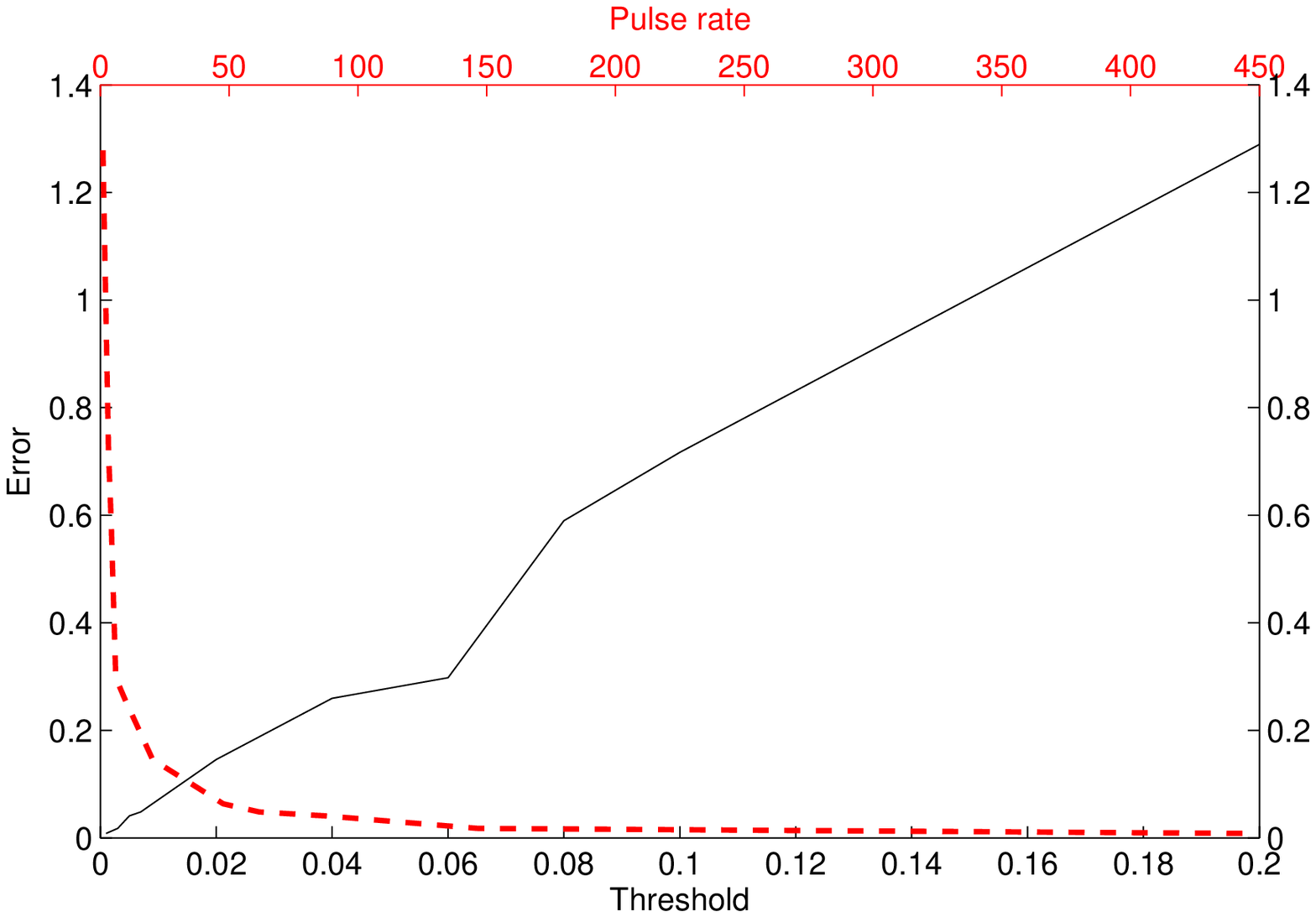}
\caption{Variation of the error $||f - \widehat{f}||_{\infty}$ in relation to
the threshold and pulse rate (dotted line) with $\beta = 0.25$.}
\label{fig:error}
	\end{center}
\end{figure}
The error depends on the choice of generator and the oversampling period $\beta$, as seen in Figure~\ref{fig:errorvsBeta}. The relationship between the kernels and the optimal oversampling period is still not evident.
\begin{figure}[ht]
	\begin{center}
\includegraphics[width=0.7\textwidth]{./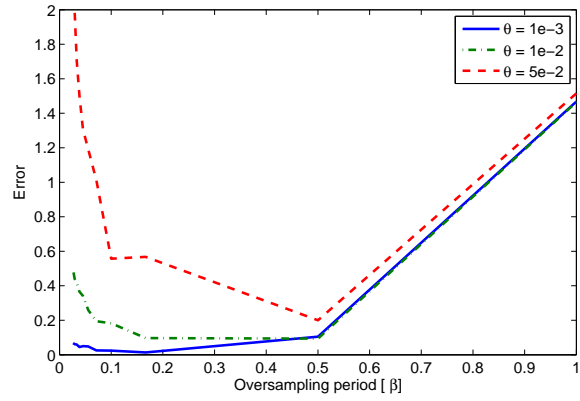}
\caption{Variation of the error in relation to the oversampling period for different thresholds. The error is defined as $||f - \widehat{f}||_{\infty}$.}
\label{fig:errorvsBeta}
	\end{center}
\end{figure}
\section{Acknowledgements}
The second and fourth authors were supported by NINDS (Grant Number: NS053561). The third author was partially supported by the following grants:
PICT06-00177, CONICET PIP 112-200801-00398 and UBACyT X149. The third and fourth authors' visit to the Numerical Harmonic Analysis Group (NuHAG) of the University of Vienna was funded by the European Marie Curie Excellence Grant EUCETIFA FP6-517154.
\clearpage
\bibliographystyle{abbrv}
\bibliography{if}
\end{document}